\documentclass[12pt,a4paper,reqno]{article}
\usepackage{graphics}
\usepackage{epsfig}

\textheight 9.1in \textwidth 6.2in \topmargin 0in \headheight 0in
\oddsidemargin .1in \evensidemargin .1in
\usepackage{amsmath,amsthm,amssymb}
\newtheorem{theorem}{Theorem}[section]

\newtheorem{example}[theorem]{Example}
\theoremstyle{definition}

\theoremstyle{remark}

\numberwithin{equation}{section}

\begin{document}
\title{ Linear Codes from Incidence Matrices  of  Unit Graphs}

\author{N. Annamalai\\
Assistant Professor\\
Indian Institute of Information Technology Kottayam\\
Pala, Kerala, India\\
{Email: algebra.annamalai@gmail.com}
\bigskip\\
C Durairajan\\
Associate  Professor\\
Department of Mathematics\\ 
School of Mathematical Sciences\\
Bharathidasan University\\
Tiruchirappalli-620024, Tamil Nadu, India\\
{Email: cdurai66@rediffmail.com}
\hfill \\
\hfill \\
\hfill \\
\hfill \\
{\bf Running head:} Linear Codes from Incidence Matrices  of  Unit Graphs}
\date{}
\maketitle

\newpage

\vspace*{0.5cm}
\begin{abstract} 
In this paper,	we examine the binary linear codes with respect to Hamming metric from incidence matrix of a unit graph $G(\mathbb{Z}_{n})$ with vertex set is $\mathbb{Z}_{n}$ and two distinct vertices $x$ and $y$  being adjacent if and only if $x+y$ is unit. The main parameters of the codes are given.
\end{abstract}
\vspace*{0.5cm}

{\it Keywords:} Linear codes, Incidence matrix, Unit graph.

{\it 2000 Mathematical Subject Classification:} 94B05, 05C50, 05C38.
\vspace{0.5cm}
\vspace{1.5cm}
\section{Introduction}

Let $n$ be a positive integer and let $\mathbb{Z}_n$ be the ring of integers modulo $n.$ In \cite{gri}, Ralph P. Grimaldi defined a graph $G(\mathbb{Z}_n)$ based on the elements and units of $\mathbb{Z}_n.$ The vertices of the unit graph $G(\mathbb{Z}_n)$ are the elements of $\mathbb{Z}_n$ and distinct vertices $x$ and $y$ are defined to be adjacent if and only if $x + y$ is a unit of $\mathbb{Z}_n.$ That is, $xy$ is an edge if and only if $x + y$ is a unit in $\mathbb{Z}_n.$ For a positive integer $m,$ it follows that $G(\mathbb{Z}_{2m})$ is a $\phi(2m)$-regular
graph where $\phi$ is the Euler phi function.

Codes generated by the incidence matrices and the adjacency matrices of 
various  graphs are discussed in \cite{3, cd1, cd2, anna}.

In what follows, all rings are associative with nonzero identity, denoted
by 1, which is preserved by ring homomorphisms and inherited by subrings.
Also throughout the article, by a graph $G$ we mean a finite undirected graph
without loops or multiple edges (unless otherwise specified).

\section{Preliminaries }
In this section, we study the basic properties of unit graphs and linear codes.

\indent Let $\mathbb{F}_q$ denote the finite field with cardinality $q.$ Then $\mathbb{F}_q^n$ is  a $n$-dimensional vector space over the finite field $\mathbb{F}_q.$  The Hamming weight 
$w_{H}(a$) of $a\in \mathbb{F}_q$ is \begin{align*}
	w_{H}(a)=\begin{cases}0&\,\,\text{ for }\,\,a=0\\1&\,\,\text{ for }\,\, 
		a\neq 0.
	\end{cases}
\end{align*}
Let  $a,b\in\mathbb{F}_q$, then the Hamming distance of $a$ and $b$ is defined to be $d_H(a,b)=w_{H}(a-b).$
 For a word $x=(x_1, \dots, x_n)\in \mathbb{F}_q^n,$ the Hamming weight $w_{H}(x)$ is defined to be $$w_{H}(x)=\sum\limits_{i=1}^nw_{H}(x_{i}).$$
Let $x=(x_1,\dots, x_n), y=(y_1, \dots, y_n) \in \mathbb{F}_q^n,$ then the  Hamming distance between $x$ and $y$ is defined by
$$d_H(x,y)=\sum\limits_{i=1}^{n}d_H(x_i,y_i)=\sum\limits_{i=1}
^{n}w_H(x_i-y_i).$$

A non-empty subset $C$ of $\mathbb{F}_{q}^{n}$ is said to be a {\it $q$-ary code of length $n.$} An element of the code
$C$ is called a { \it codeword.} A $q$-ary linear code $C$ of length $n$  is a 
subspace of the 
vector space $\mathbb{F}_{q}^{n}$ over 
$\mathbb{F}_{q}.$\\ 
\indent The minimum Hamming distance of a code $C$ is defined by
$$
d_{H}(C)=\min\limits_{c_{1},c_{2}\in C}\{
d_{H}(c_{1}, 
c_{2}) \mid c_1\neq c_2\}.$$
\indent The minimum weight of a code $C$ is the smallest among all weights of the non-zero codewords of $C.$ For $q$-ary linear code, we have 
$d_{H}(C)=w_{H}(C).$

All the codes here are linear codes and the notation $[n, k, d]_{q}$ 
will be used for a $q$-ary code of length $n,$ dimension $k$ and minimum 
distance $d.$ A linear $[n, k, d]_{q}$  code  is said to be a {\it Maximum Distance Separable(MDS) code} if $d=n -k + 1.$ A generator matrix $B$ for a 
linear code $C$ is a matrix whose rows form a basis for the subspace $C$ and $C_{q}(B)$ is a code generated by matrix $B$ over a finite field $\mathbb{F}_{q}.$

 Let $G=(V, E)$  be a graph with vertex set $V,$ edge set $E$ and for any $x, y \in V,\,[x,y]$ is denoted by an edge between $x$ and $y.$ An incidence matrix of 
$G$ is a $|V|\times|E|$ matrix $H$ with rows labelled by the vertices and columns by the edges and entries $h_{ij}=1$ if $i$th vertex is adjacent to the $j$th edge and $h_{ij}=0$ otherwise.

Let $R$ be a ring and let $U(R)$ be the set of unit elements of $R.$ The unit graph of $R,$ denoted $G(R),$ is the graph obtained by setting all the elements of $R$ to be the vertices and defining distinct vertices $x$ and $y$ to be
adjacent if and only if $x + y \in U(R).$

Let $G$ be a graph. We denote $V(G)$ as the vetex set of $G$ and  $E(G)$ as the edge set of $G.$ Let $x\in V(G),$ then the degree of $x,$ denoted $deg(x),$ is the number of edges of $G$ incident with $x.$ A graph $G$ is said to be a { \it $r$-regular } if the degree of each vertex of $G$ is equal to $r.$

Let $W, X\subseteq V$ with $W\cap X=\emptyset$ and let  $E(W, X)$ be the set of edges that have one end in $W$ and the other end in $X.$ Write $|E(W, X)|=q(W, X).$ 

An edge-cut of a connected graph $G$ is the set $S\subseteq E$ such that $G- S=(V, E-S)$ is disconnected.

The edge-connectivity $\lambda(G)$ is the minimum cardinality of an edge-cut. That is, $$\lambda(G)=\min\limits_{\emptyset \neq W\subsetneq V} q(W, V-W).$$
For any connected graph $G,$ we have $\lambda(G)\leq \delta(G)$
where $\delta(G)$ is minimum degree of the graph $G.$ 
\begin{theorem}\cite{ash}\label{thm1}
Let $R$ be a finite ring. Then the following statements hold for the
unit graph of $R.$
\begin{enumerate}
\item[(a)] If $2\notin U(R),$ then the unit graph $G(R)$ is a $|U(R)|$-regular graph.
\item[(b)] If $2\in U(R),$ then for every $x\in U(R)$ we have $deg(x)= |U(R)|- 1$ and for every $x\in R\setminus U(R)$ we have $deg(x)=|U(R)|.$
\end{enumerate}
\end{theorem}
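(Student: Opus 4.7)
The plan is to determine $\deg(x)$ for an arbitrary vertex $x \in R$ by placing the neighbors of $x$ in bijection with a suitable subset of $U(R)$.

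The key observation is that a vertex $y \neq x$ is adjacent to $x$ in $G(R)$ if and only if $x + y \in U(R)$. Writing $u = x + y$, the translation $y \mapsto x + y$ is a bijection of $R$ onto itself, so it restricts to a bijection between $\{y \in R : x + y \in U(R)\}$ and $U(R)$. Under this bijection, the ``forbidden'' choice $y = x$ corresponds to $u = 2x$. Consequently,
\[
\deg(x) \;=\; |U(R)| \;-\; \bigl|\{u \in U(R) : u = 2x\}\bigr|,
\]
which equals $|U(R)|$ when $2x \notin U(R)$ and $|U(R)| - 1$ when $2x \in U(R)$.

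It therefore suffices to decide when $2x \in U(R)$. Since $2 = 1 + 1$ lies in the center of $R$, and $R$ is finite so one-sided inverses are two-sided, I would verify that $2x \in U(R)$ if and only if \emph{both} $2 \in U(R)$ and $x \in U(R)$. Sufficiency follows from closure of $U(R)$ under multiplication. For necessity, from $(2x)w = 1$ and centrality of $2$ one gets $2(xw) = 1$, exhibiting $2$ as a unit; multiplying through by $2^{-1}$ then exhibits $x$ as a unit as well.

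Combining these two facts settles both parts of the theorem immediately. In case (a), $2 \notin U(R)$ forces $2x \notin U(R)$ for every $x \in R$, so $\deg(x) = |U(R)|$ uniformly and $G(R)$ is $|U(R)|$-regular. In case (b), $2x \in U(R)$ holds exactly when $x \in U(R)$, yielding $\deg(x) = |U(R)| - 1$ for $x \in U(R)$ and $\deg(x) = |U(R)|$ for $x \in R \setminus U(R)$. The only mildly technical step is the unit characterization of $2x$ in a possibly noncommutative finite ring; everything else is bookkeeping through the bijection $y \leftrightarrow u$.
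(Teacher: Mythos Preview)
The paper does not actually supply a proof of this theorem; it is quoted verbatim from \cite{ash} (Ashrafi, Maimani, Pournaki, Yassemi) and stated without argument, so there is no in-paper proof to compare against.

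On its own merits your argument is correct. The bijection $y\mapsto x+y$ cleanly identifies the neighbor set of $x$ with $U(R)\setminus\{2x\}$ (when $2x\in U(R)$) or with all of $U(R)$ (when $2x\notin U(R)$), giving $\deg(x)=|U(R)|-[2x\in U(R)]$. Your reduction to the equivalence ``$2x\in U(R)\iff 2\in U(R)$ and $x\in U(R)$'' is also sound: centrality of $2$ lets you peel it off from either side, and finiteness of $R$ upgrades the resulting one-sided inverse of $2$ to a two-sided one; once $2$ is a unit, $x=2^{-1}(2x)$ is a product of units. This is exactly the standard route (and, as far as I recall, essentially what the cited source does), so you have not missed anything.

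One tiny stylistic point: in the necessity direction you only explicitly produce a \emph{right} inverse of $2$ from $(2x)w=1$; the left inverse comes from $w(2x)=1$ together with centrality, or else from the finiteness hypothesis you invoke. You clearly have this in mind, but in a written-up version it is worth saying in one line rather than leaving it implicit.
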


\begin{theorem}\cite{3} \label{thm2}
 Let $G= (V, E)$ be a connected graph and let  $H$ be a $|V|\times|E|$ incidence matrix for $G.$ Then
		\begin{itemize}
		\item[1.] the code $C_2(H)$ is a $[|E|, |V|- 1, \lambda(G)]_2$ code
			\item[2.]  the code
		$C_p(H)$ is $[|E|, |V|- 1, \lambda(G)]_p$ code for an odd prime $p$  and  $G$ is bipartite.
	\end{itemize}
\end{theorem}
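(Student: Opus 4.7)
The length of $C_q(H)$ is clearly $|E|$, so the real work is in pinning down the dimension and the minimum distance.

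For the dimension, I would analyze the kernel of the map $\phi: \mathbb{F}_q^{|V|} \to \mathbb{F}_q^{|E|}$ defined by $\phi(a) = a^{T} H$, since the image of $\phi$ is $C_q(H)$ and hence $\dim C_q(H) = |V| - \dim \ker \phi$. The $e$-th coordinate of $\phi(a)$ for an edge $e=[u,v]$ equals $a_u + a_v$, so $\ker \phi$ is cut out by the equations $a_u = -a_v$ ranging over all edges. In the binary case this forces $a$ to be constant on each connected component, and connectedness of $G$ gives $\dim \ker \phi = 1$. In the odd-prime bipartite case with parts $A, B$, the same relations propagate $a$ from any root vertex along a path by the sign $(-1)^{\mathrm{dist}}$, with bipartiteness guaranteeing consistency and connectedness ensuring the propagation reaches every vertex; the kernel is again one-dimensional, spanned by $\chi_A - \chi_B$. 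Either way $\dim C_q(H) = |V|-1$.

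For the minimum distance I would prove the two inequalities separately. For $d_H(C_q(H)) \leq \lambda(G)$, I take a minimum edge-cut $E(W, V\setminus W)$ with $\emptyset \neq W \subsetneq V$ and exhibit an explicit codeword of weight $|E(W, V\setminus W)|$: over $\mathbb{F}_2$ simply sum the rows indexed by $W$; over $\mathbb{F}_p$ in the bipartite case use the signed indicator $a_v = +1$ on $W \cap A$, $a_v = -1$ on $W \cap B$, and $a_v = 0$ elsewhere, so that the edges strictly inside $W$ contribute $1+(-1)=0$ while each cut edge contributes $\pm 1$.

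For $d_H(C_q(H)) \geq \lambda(G)$, I reuse the same sign substitution to control the support of an arbitrary nonzero codeword. Given $c = \sum_v a_v r_v \neq 0$, define $b_v := a_v$ on $A$ and $b_v := -a_v$ on $B$ (take $b = a$ in the binary case); then the coefficient of $c$ at $e = [u,v]$ becomes $b_u - b_v$, which is nonzero iff $b_u \neq b_v$. Since $c \neq 0$ and $G$ is connected, $b$ is not constant, so picking any value-class $W$ of $b$ produces a proper nonempty subset of $V$ with $E(W, V\setminus W) \subseteq \mathrm{supp}(c)$, whence $w_H(c) \geq |E(W, V\setminus W)| \geq \lambda(G)$. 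The main obstacle is exactly this last step for odd $p$: without bipartiteness the sign substitution $b_v = \pm a_v$ is ill-defined along an odd cycle, $\ker \phi$ collapses to zero, and the code dimension jumps to $|V|$, which is precisely why bipartiteness is essential in the odd-prime statement.
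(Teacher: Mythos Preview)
The paper does not supply its own proof of this theorem: it is quoted verbatim from reference~\cite{3} (Dankelmann--Key--Rodrigues) and used as a black box, so there is nothing in the present manuscript to compare your argument against.

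That said, your proposal is sound as a self-contained proof. The dimension count via $\ker\phi$ is the standard one, and your identification of the kernel (constants over $\mathbb{F}_2$; the signed bipartition vector $\chi_A-\chi_B$ over $\mathbb{F}_p$) is correct. For the minimum distance, your upper bound via the explicit cut codeword is fine, and the lower bound via the sign-twist $b_v=\pm a_v$ and a level-set $W=b^{-1}(\beta)$ is the right idea; the inclusion $E(W,V\setminus W)\subseteq\operatorname{supp}(c)$ then gives $w_H(c)\ge\lambda(G)$ as claimed. Your closing remark about the necessity of bipartiteness for odd $p$ is also accurate: on a non-bipartite graph the kernel over $\mathbb{F}_p$ is trivial, so the dimension becomes $|V|$ rather than $|V|-1$, and the cut interpretation of supports via the $b$-substitution breaks down along odd cycles. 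If you want to tighten the write-up, you might state explicitly that $c\ne 0$ forces $b$ non-constant (since $c_e=b_u-b_v$), which is what guarantees the chosen level set $W$ is both nonempty and proper.
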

In this paper, we obtain a linear codes from the incidence matrix of the unit graph 
 $G(\mathbb{Z}_n)$ over $\mathbb{F}_2$ and we determined the main parameters of the code. In section 3, we discussed codes from incidence matrix of a unit graph $G(\mathbb{Z}_{p_1})$ over the field $\mathbb{F}_2.$ In section 4, we discussed codes from incidence matrix of a unit graph $G(\mathbb{Z}_{2p_1})$ over the finite field $\mathbb{F}_2.$ All the unit graphs considered in this article is a simple and undirected graph.
\section{Linear Codes from the Incidence Matrix of a unit Graph $G(\mathbb{Z}_{p_1})$}
In this section, we  study a binary linear code obtained from the incidence matrix of the unit graph $G(\mathbb{Z}_{p_1})$ where $p_1$ is an odd prime and find its parameters.

 The units of $\mathbb{Z}_{p_1}$  is $U(\mathbb{Z}_{p_1})=\{1, 2, \cdots, p_1-1\}.$
Let $G(\mathbb{Z}_{p_1})$ be the unit graph with vertex set $V=\mathbb{Z}_{p_1}$ and two distinct vertices $x$ and $y$ are adjacent if and only if $x+y\in U(\mathbb{Z}_{p_1}).$ 
\begin{theorem}\label{thm3}
Let $G(\mathbb{Z}_{p_1})=(V, E)$ be a unit graph. Then the graph is connected with	$|V|=p_1$ and $|E|=\dfrac{(p_1-1)^2}{2}.$
\end{theorem}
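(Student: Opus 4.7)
The statement has three assertions: the vertex count, the edge count, and connectedness. My plan is to knock each off in turn, using Theorem~\ref{thm1} to handle the degree bookkeeping.

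The vertex count $|V|=p_1$ is built into the definition of $G(\mathbb{Z}_{p_1})$. For connectedness, the key observation is that $0+x = x$, which is a unit of $\mathbb{Z}_{p_1}$ for every nonzero $x \in \mathbb{Z}_{p_1}$ since $p_1$ is prime. Hence the vertex $0$ is adjacent to every other vertex, so $G(\mathbb{Z}_{p_1})$ contains a spanning star centred at $0$ and is therefore connected. This is the easy part and I would dispense with it in one sentence.

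For the edge count, I would invoke Theorem~\ref{thm1}. Since $p_1$ is odd, $2 \in U(\mathbb{Z}_{p_1})$, so case (b) applies: every unit vertex has degree $|U(\mathbb{Z}_{p_1})|-1 = p_1 - 2$, and every non-unit vertex has degree $|U(\mathbb{Z}_{p_1})| = p_1-1$. The only non-unit in $\mathbb{Z}_{p_1}$ is $0$, giving one vertex of degree $p_1-1$ and $p_1-1$ vertices of degree $p_1-2$. The handshake lemma then yields
\begin{equation*}
2|E| \;=\; (p_1-1) + (p_1-1)(p_1-2) \;=\; (p_1-1)\bigl[1+(p_1-2)\bigr] \;=\; (p_1-1)^2,
\end{equation*}
so $|E| = (p_1-1)^2/2$, which is an integer because $p_1-1$ is even.

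There is no genuine obstacle here; the only thing to be careful about is to note explicitly that $p_1$ odd forces $2$ to be a unit (so that case (b) of Theorem~\ref{thm1}, rather than case (a), is the one in force) and that the unique non-unit of $\mathbb{Z}_{p_1}$ is $0$, which accounts for the asymmetric degree sum.
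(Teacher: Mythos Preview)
Your proof is correct. The vertex count and the connectedness argument (the spanning star centred at $0$) coincide exactly with the paper's. Where you differ is in the edge count: you invoke Theorem~\ref{thm1}(b) to read off the degree sequence and then apply the handshake lemma, whereas the paper counts edges directly by listing, for each vertex $j$ in increasing order, the edges $[j,y]$ with $y>j$ that have not already been counted, and then sums the resulting arithmetic progressions. Your route is shorter and exploits a result already available in the paper; the paper's enumeration is more self-contained but requires more bookkeeping (tracking which neighbours of $j$ exceed $j$ and excluding $p_1-j$).
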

\begin{proof}
Let $G(\mathbb{Z}_{p_1})$ be a unit graph. Then by definition, $V=\mathbb{Z}_{p_1}.$ Since all non-zero elements are units, $0$ and $y$ are adjacent for all nonzero $y \in \mathbb{Z}_{p_1}.$ Therefore the graph is a connected graph.
\begin{align*}
 \Big|\Big\{[0, y] \mid y\in \mathbb{Z}_{p_1}\setminus \{0\} \Big\}\Big|&=p_1-1\\
\Big|\big\{[j, y] \mid y\in \mathbb{Z}_{p_1}\setminus \{0, 1, \cdots,j, p_1-j\} \big\}\Big|&=p_1-(j+2) \text{ for } 1\leq j\leq \frac{p_1-1}{2}
\end{align*}
For $\dfrac{p_1+1}{2}\leq j\leq p_1-2,$ we have 
$\Big|\big\{[j, y] \mid j<y \big\}\Big|=p_1-(j+1).$
Therefore,  $$|E|=p_1-1+\sum\limits_{j=1}^{\dfrac{p_1-1}{2}}p_1-(j+2)+\sum\limits_{j=1}^{\dfrac{p_1-3}{2}}j=\frac{(p_1-1)^2}{2}.$$ Hence, the unit graph $G(\mathbb{Z}_{p_1})$  is a connected graph with $p_1$ vertices and $\dfrac{(p_1-1)^2}{2}$ edges.
\end{proof}

\begin{theorem}\label{thm4}
	Let $G(\mathbb{Z}_{p_1})$ be the unit graph. Then the edge-connectivity $\lambda(G(\mathbb{Z}_{p_1}))$ of the unit graph $G(\mathbb{Z}_{p_1})$ is $p_1-2.$
\end{theorem}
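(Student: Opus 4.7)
The plan is to sandwich the edge-connectivity between $p_1 - 2$ from above and $p_1 - 2$ from below. The upper bound is the easy half: by Theorem~\ref{thm1}(b), which applies because $p_1$ is odd and hence $2 \in U(\mathbb{Z}_{p_1})$, every nonzero $x$ has $\deg(x) = |U(\mathbb{Z}_{p_1})| - 1 = p_1 - 2$, while $\deg(0) = p_1 - 1$. Thus $\delta(G(\mathbb{Z}_{p_1})) = p_1 - 2$, and the general inequality $\lambda(G) \leq \delta(G)$ noted in the preliminaries yields $\lambda(G(\mathbb{Z}_{p_1})) \leq p_1 - 2$. This bound is tight in principle, witnessed by isolating any single nonzero vertex $x$: the cut $W = \{x\}$ has exactly $\deg(x) = p_1 - 2$ edges leaving it.

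For the matching lower bound, I would take an arbitrary nontrivial edge-cut coming from a partition $V = W \sqcup (V \setminus W)$. By swapping $W$ with its complement if necessary, I may assume $1 \leq k := |W| \leq (p_1-1)/2$. The standard handshake identity
\begin{equation*}
\sum_{v \in W} \deg(v) \;=\; 2\,|E(W)| + q(W, V \setminus W)
\end{equation*}
combined with the uniform degree bound $\deg(v) \geq p_1 - 2$ and the trivial edge bound $|E(W)| \leq \binom{k}{2}$ gives
\begin{equation*}
q(W, V \setminus W) \;\geq\; k(p_1 - 2) - k(k-1) \;=\; k(p_1 - k - 1).
\end{equation*}

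It then remains to verify the elementary inequality $k(p_1 - k - 1) \geq p_1 - 2$ for integers $1 \leq k \leq (p_1-1)/2$. Writing $f(k) = k(p_1 - k - 1)$, one sees $f(1) = p_1 - 2$ and $f$ is concave in $k$ with vertex near $k = (p_1-1)/2$, so $f$ is nondecreasing on the relevant range and thus $f(k) \geq p_1 - 2$ throughout. This gives $\lambda(G(\mathbb{Z}_{p_1})) \geq p_1 - 2$, completing the proof.

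The main obstacle is really only the final inequality check, which is a routine but necessary piece of bookkeeping; there is a mild subtlety in that the vertex $0$ has degree one higher than all the other vertices, but since this only improves the lower bound on $\sum_{v \in W} \deg(v)$, it suffices throughout to use the uniform estimate $\deg(v) \geq p_1 - 2$. I would also briefly note why $W$ can be assumed to satisfy $k \leq (p_1-1)/2$: the quantity $q(W, V \setminus W)$ is invariant under swapping $W$ with $V \setminus W$, and since $p_1$ is odd, no partition has $k = p_1/2$, so we can always normalize to the smaller side.
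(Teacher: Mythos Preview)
Your proof is correct and follows the same overall strategy as the paper's: obtain the upper bound $\lambda \leq p_1 - 2$ from the minimum degree (equivalently, from the singleton cut $W = \{a\}$ with $a \neq 0$), and then argue that every nontrivial cut has at least $p_1 - 2$ crossing edges. The paper's own proof simply \emph{asserts} the latter --- it states without justification that $q(W, V - W) > p_1 - 2$ whenever $|W| > 1$ --- whereas your handshake-identity computation $q(W, V\setminus W) \geq k(p_1 - 2) - k(k-1) = k(p_1 - 1 - k) \geq p_1 - 2$ actually proves it, so your argument is in fact more complete than the one in the paper.
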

\begin{proof}
If $W=\{0\}\subset V,$ then $q(W, V-W)=p_1-1.$ If $W=\{a\}\subset V, a\neq 0,$ then $q(W, V-W)=p_1-2.$ 
If $W\subset V$ with $ |W|>1,$ then $q(W, V-W)>p_1-2.$ Therefore, $\lambda(G(\mathbb{Z}_{p_1}))=p_1-2.$
\end{proof}
As a consequence of above theorems, we have 
\begin{theorem}\label{thm5}
Let $p_1$ be an odd prime. Then the code generated by the incidence matrix $H$ of the unit graph $G(\mathbb{Z}_{p_1})$ is a  $C_2(H)=\Big[\dfrac{(p_1-1)^2}{2}, p_1-1, p_1-2\Big]_2$ code over the finite field $\mathbb{F}_2.$
\end{theorem}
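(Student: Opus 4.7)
The plan is to obtain this result as a direct corollary of Theorem~\ref{thm2}(1), using the structural facts about $G(\mathbb{Z}_{p_1})$ that were just proved in Theorems~\ref{thm3} and~\ref{thm4}. Since $C_2(H)$ is the binary code spanned by the rows of the incidence matrix $H$, we are in the regime of part~1 of Theorem~\ref{thm2}, so no bipartiteness hypothesis is needed.

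First I would observe that the hypotheses of Theorem~\ref{thm2} are met: by Theorem~\ref{thm3}, $G(\mathbb{Z}_{p_1})$ is connected, and $H$ is by construction a $|V|\times|E|$ incidence matrix of this connected graph. Then I would read off the three parameters: the length is $|E|=\tfrac{(p_1-1)^2}{2}$ (from Theorem~\ref{thm3}), the dimension is $|V|-1=p_1-1$ (again from Theorem~\ref{thm3}), and the minimum distance equals the edge-connectivity $\lambda(G(\mathbb{Z}_{p_1}))=p_1-2$ (from Theorem~\ref{thm4}). Substituting these into the conclusion of Theorem~\ref{thm2}(1) gives exactly the claimed parameters
\[
C_2(H)=\Big[\tfrac{(p_1-1)^2}{2},\,p_1-1,\,p_1-2\Big]_2.
\]

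There is essentially no obstacle beyond bookkeeping, because the work has already been done upstream: Theorem~\ref{thm3} handles the combinatorics of counting $|E|$ via the adjacency rule $x+y\in U(\mathbb{Z}_{p_1})$, and Theorem~\ref{thm4} handles the edge-connectivity by the trivial cuts around single vertices. The only thing worth emphasizing in the writeup is why the length-$|E|$ and dimension-$(|V|-1)$ statement is valid over $\mathbb{F}_2$ without any further hypothesis on $G(\mathbb{Z}_{p_1})$ (in particular, one does \emph{not} need the graph to be bipartite, since we are in characteristic $2$), which is precisely part~1 of Theorem~\ref{thm2}. Thus the proof reduces to citing these three theorems in sequence.
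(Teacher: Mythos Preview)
Your proposal is correct and follows essentially the same approach as the paper: cite Theorem~\ref{thm2}(1) for connected graphs over $\mathbb{F}_2$, then plug in $|V|=p_1$, $|E|=\tfrac{(p_1-1)^2}{2}$ from Theorem~\ref{thm3} and $\lambda(G(\mathbb{Z}_{p_1}))=p_1-2$ from Theorem~\ref{thm4}. Your added remark that bipartiteness is not needed in characteristic~$2$ is a useful clarification the paper leaves implicit.
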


\begin{proof}
	Let  $G(\mathbb{Z}_{p_1})$ be a unit graph and let $H$ be the incidence matrix of $G(\mathbb{Z}_{p_1}).$ Since  $G(\mathbb{Z}_{p_1})$ is a connected graph, by Theorem \ref{thm2}, the code $C_2(H)$ is a $[|E|, |V|-1, \lambda(G(\mathbb{Z}_{p_1}))]_2$ code. By Theorem \ref{thm3}, we have  $|E|=\dfrac{(p_1-1)^2}{2}$ and $|V|=p_1.$ By Theorem \ref{thm4}, the edge-connective of the unit graph $G(\mathbb{Z}_{p_1})$ is $p_1-2$ and hence $C_2(H)$ is a $\Big[\dfrac{(p_1-1)^2}{2}, p_1-1, p_1-2\Big]_2$ code.
\end{proof}
\begin{example}
	The unit graph $G(\mathbb{Z}_{5})=(V, E)$ with $|V|=5$ and $|E|=\frac{(5-1)^2}{2}=8$ is 
	\begin{center}
		\includegraphics{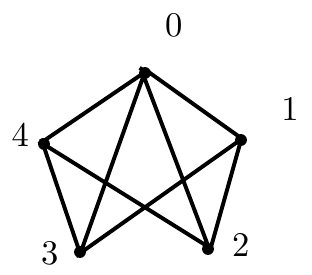}
	\end{center}
		Then the incidence matrix of the unit graph $G(\mathbb{Z}_{5})$ is
	$$H=\begin{pmatrix}
		1&1&1&1&\vline&0&0&0&0\\
		\hline
		1&0&0&0&\vline&1&1&0&0\\
		0&1&0&0&\vline&1&0&1&0\\
		0&0&1&0&\vline&0&1&0&1\\
		0&0&0&1&\vline&0&0&1&1
	\end{pmatrix}_{5\times 8}$$
Any four rows of $H$ are linearly independent and the graph is connected, then the dimension of the code $C_2(H)$ over the finite field $\mathbb{F}_2$ generated by $H$ is $4.$ The minimum distance of the code $C_2(H)$ is $3.$ Hence $C_2(H)$ is an $[n, k, d]_2=[8, 4, 3]_2$ code.
\end{example}

\section{Linear Codes from the Incidence Matrix of a unit Graph $G(\mathbb{Z}_{2p_1})$}
In this section, we  study the linear codes obtained from the incidence matrix of the unit graph $G(\mathbb{Z}_{2p_1})$ where $p_1$ is an odd prime over $\mathbb{F}_2$ and we find the parameters of the code.

The units of $\mathbb{Z}_{2p_1}$  is $U(\mathbb{Z}_{2p_1})=\{k\in \mathbb{Z}_{2p_1} \mid (k, 2p_1)=1\}$ and $|U(\mathbb{Z}_{2p_1})|=p_1-1.$

\begin{theorem}\label{thm4}
	Let $G(\mathbb{Z}_{2p_1})=(V, E)$ be a unit graph. Then 	$|V|=2p_1$ and $|E|=p_1(p_1-1).$
\end{theorem}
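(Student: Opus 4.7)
The plan is to read off $|V|=2p_1$ straight from the definition (the vertex set is $\mathbb{Z}_{2p_1}$) and then get $|E|$ by combining a regularity statement for the graph with the handshaking lemma, rather than by direct edge-by-edge enumeration as in Theorem \ref{thm3}.

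For the regularity step, I would first verify the hypothesis of Theorem \ref{thm1}(a), namely that $2\notin U(\mathbb{Z}_{2p_1})$. This is immediate since $\gcd(2,2p_1)=2\neq 1$. Theorem \ref{thm1}(a) then yields that $G(\mathbb{Z}_{2p_1})$ is $|U(\mathbb{Z}_{2p_1})|$-regular. Next I would compute $|U(\mathbb{Z}_{2p_1})|=\phi(2p_1)=\phi(2)\phi(p_1)=p_1-1$ using multiplicativity of Euler's function for the coprime factors $2$ and $p_1$ (this already appears in the paragraph preceding the theorem, so I can just cite it).

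With the degree of every vertex equal to $p_1-1$, the handshaking lemma gives
\[
|E|=\frac{1}{2}\sum_{x\in V}\deg(x)=\frac{1}{2}\cdot 2p_1\cdot(p_1-1)=p_1(p_1-1),
\]
which is the desired count.

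There is no real obstacle; the only point that must be acknowledged explicitly is the implicit assumption that the unit graph has no loops, so that the handshaking lemma applies in its standard form. This is guaranteed by the ``distinct vertices'' clause in the definition of $G(\mathbb{Z}_{2p_1})$, so I would just add a one-line remark to that effect. Direct enumeration in the style of Theorem \ref{thm3} would also work, but using Theorem \ref{thm1}(a) is much cleaner and avoids splitting ranges of indices.
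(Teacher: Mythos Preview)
Your proposal is correct and is essentially identical to the paper's own proof: the paper also reads off $|V|=2p_1$ from the definition, checks $\gcd(2,2p_1)=2\neq 1$ to invoke Theorem~\ref{thm1}(a) for $(p_1-1)$-regularity, and then applies the handshaking identity $|E|=\tfrac{nk}{2}$ to obtain $|E|=p_1(p_1-1)$. Your extra remarks about $\phi(2p_1)$ and the absence of loops only make explicit what the paper leaves implicit.
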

\begin{proof}
	Let $G(\mathbb{Z}_{p_1})$ be a unit graph. Then by definition, $V=\mathbb{Z}_{2p_1}.$ 
	Since $(2, 2p_1)=2\neq 1,$ by Theorem \ref{thm1}, $G(\mathbb{Z}_{2p_1})$ is a $|U(\mathbb{Z}_{2p_1})|=(p_1-1)$-regular graph. That is, degree of every vertex is $p_1-1.$
	Since the number of edges of the $k$-regular graph with $n$ vertices is $\dfrac{nk}{2},$ implies $|E|=p_1(p_1-1).$
\end{proof}

\begin{theorem}\label{thm6}
	Let $p_1$ be an odd  prime. Then the code generated by the incidence matrix $H$ of the unit graph $G(\mathbb{Z}_{2p_1})$ is a  $C_2(H)=\Big[p_1(p_1-1), 2p_1-1, p_1-1\Big]_2$ code over the finite field $\mathbb{F}_2.$
\end{theorem}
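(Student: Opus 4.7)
The plan is to reduce everything to Theorem \ref{thm2}, so it suffices to check that $G(\mathbb{Z}_{2p_1})$ is connected and that $\lambda(G(\mathbb{Z}_{2p_1})) = p_1 - 1$; the counts $|E| = p_1(p_1-1)$ and $|V| - 1 = 2p_1 - 1$ are already given by Theorem \ref{thm4}.

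My first move would be to identify the graph-theoretic structure of $G(\mathbb{Z}_{2p_1})$. Because a unit of $\mathbb{Z}_{2p_1}$ is precisely an element coprime to $2p_1$, the sum $x+y$ is a unit exactly when it is odd and not divisible by $p_1$. The oddness forces every edge to run between the set $V_e$ of even elements and the set $V_o$ of odd elements, so the graph is bipartite with $|V_e|=|V_o|=p_1$. The second condition shows that an even $x$ and an odd $y$ are \emph{non}-adjacent iff $x+y\equiv 0\pmod{p_1}$, and this relation pairs each even element with a unique odd one. Hence $G(\mathbb{Z}_{2p_1})$ is isomorphic to $K_{p_1,p_1}$ with a perfect matching $M$ removed. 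Connectedness is then immediate for $p_1\ge 3$: any two vertices in the same part share at least $p_1-2\ge 1$ common neighbors, and two vertices in different parts are either adjacent or linked by a short path around the matched pair.

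For the edge-connectivity, the upper bound $\lambda\le \delta=p_1-1$ follows by taking $W=\{v\}$. For the matching lower bound I would fix a proper nonempty $W\subsetneq V$, set $a=|W\cap V_e|$, $b=|W\cap V_o|$, and let $k$ denote the number of pairs of $M$ that lie entirely inside $W$. A direct count on $K_{p_1,p_1}$ minus a perfect matching yields
\[
q(W, V-W) \;=\; (p_1-1)(a+b)-2ab+2k.
\]
Since $k$ enters with a positive coefficient, the minimum is attained at $k=\max(0,a+b-p_1)$; the symmetry $W\leftrightarrow V-W$, i.e.\ $(a,b)\leftrightarrow (p_1-a,p_1-b)$, then reduces the analysis to the range $a+b\le p_1$ with $k=0$. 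It remains to show $(p_1-1)(a+b)-2ab\ge p_1-1$, which I would verify by fixing $s=a+b$, bounding $2ab\le \lfloor s^2/2\rfloor$, and checking the resulting one-variable inequality at the extremes $s=1$ (tight) and $s=p_1$.

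The only genuinely delicate step is this last inequality: the rest is structural, but one has to track the parity of $s$ and use that $p_1$ itself is odd. Once this is handled, Theorem \ref{thm2} applied to the connected graph $G(\mathbb{Z}_{2p_1})$ with its incidence matrix $H$ delivers the $[\,p_1(p_1-1),\,2p_1-1,\,p_1-1\,]_2$ code, as claimed.
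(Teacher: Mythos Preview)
Your argument is correct and considerably more detailed than the paper's own proof. Both reduce to Theorem~\ref{thm2} and invoke Theorem~\ref{thm4} for $|V|$ and $|E|$, but they diverge on the determination of $\lambda(G(\mathbb{Z}_{2p_1}))$. The paper simply asserts that because $G(\mathbb{Z}_{2p_1})$ is $(p_1-1)$-regular its edge-connectivity equals $p_1-1$, and it also takes connectedness for granted. Since regularity alone does not force $\lambda=\delta$ for an arbitrary graph, the paper is tacitly relying on structure it never makes explicit. Your identification of $G(\mathbb{Z}_{2p_1})$ as $K_{p_1,p_1}$ minus a perfect matching, together with the explicit cut count $q(W,V\setminus W)=(p_1-1)(a+b)-2ab+2k$ and the endpoint check in $s=a+b$ (the function $(p_1-1)s-\lfloor s^2/2\rfloor$ is concave, so its minimum on $\{1,\dots,p_1\}$ is indeed at $s=1$ or $s=p_1$), supplies exactly the missing justification, and your connectedness argument is likewise a genuine addition. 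The trade-off is length: the paper's proof is a few lines, while yours actually earns the equality $\lambda=p_1-1$ rather than asserting it.
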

\begin{proof}
	Let  $G(\mathbb{Z}_{2p_1})$ be a unit graph and let $H$ be the incidence matrix of $G(\mathbb{Z}_{2p_1}).$ Since  $G(\mathbb{Z}_{2p_1})$ is a connected graph, by Theorem \ref{thm2}, the code $C_2(H)$ is a $[|E|, |V|-1, \lambda(G(\mathbb{Z}_{2p_1}))]_2$ code. Since  $G(\mathbb{Z}_{2p_1})$ is $(p_1-1)$-regular graph, the edge-connectivity $\lambda(G(\mathbb{Z}_{2p_1}))$  of the unit graph $G(\mathbb{Z}_{2p_1})$ is $p_1-1$ and hence the minimum distance of the code generated by $H$ is $p_1 -1.$  Since  $|E|=p_1(p_1-1)$ and $|V|=2p_1,$ the main parameters of the code $C_2(H)$ is  $[p_1(p_1-1), 2p_1-1, p_1-1]_2.$ 
\end{proof}
\begin{example}
	The unit graph $G(\mathbb{Z}_{6})=(V, E)$ with $|V|=6$ and $|E|=6$ is 
	\begin{center}
		\includegraphics{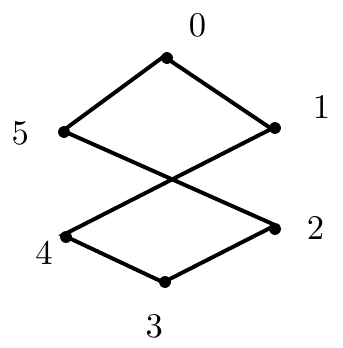}
	\end{center}
	Then the incidence matrix of the unit graph $G(\mathbb{Z}_{6})$ is
	$$H=\begin{pmatrix}
		1&0&0&0&0&0&1\\
		1&1&0&0&0&0&0\\
		0&0&1&1&0&0&0\\
		0&0&0&1&1&0&0\\
		0&1&0&0&0&1&0\\
		0&0&1&0&0&0&1
	\end{pmatrix}_{6\times 6}$$
	Any five rows of $H$ are linearly independent and the graph is connected, then the dimension of the code $C_2(H)$ over the finite field $\mathbb{F}_2$ generated by $H$ is $5.$ The minimum distance of the code $C_2(H)$ is $2.$ Hence $C_2(H)$ is an $[n, k, d]_2=[6, 5, 2]_2$ code. Since $d=n-k+1,$ the code $C_2(H)$ is a MDS code.
\end{example}

\section*{Conclusion}
In this paper, we studied the codes generated by the incidence matrix of the unit graph of the different commutative rings with unity. Also we found the main parameters of the code over finite field $\mathbb{F}_2.$ We have consider only simple and undirected graphs in this article. Finding the covering radius of these codes is the further direction to work.

\end{document}